\documentclass[namedate,webpdf,modern,mediumone]{oup-authoring-template}

\usepackage{amsmath,amssymb}
\usepackage{graphicx}
\usepackage{booktabs}

\newcommand{\societylogo}{}

\graphicspath{{./}}

\DeclareMathOperator*{\argmax}{arg\,max}
\DeclareMathOperator{\var}{var}
\DeclareMathOperator{\sd}{sd}
\DeclareMathOperator{\tr}{tr}
\newcommand{\adist}{\mathrel{\dot\sim}}
\newcommand{\bb}{\boldsymbol{\beta}}
\newcommand{\bbh}{\hat{\boldsymbol{\beta}}}
\newcommand{\bbt}{\tilde{\boldsymbol{\beta}}}
\newcommand{\by}{\boldsymbol{y}}
\newcommand{\bpi}{\boldsymbol{\pi}}

\theoremstyle{thmstyleone}
\newtheorem{proposition}{Proposition}
\theoremstyle{thmstyletwo}
\newtheorem{remark}{Remark}

\begin{document}

\journaltitle{}
\DOI{}
\copyrightyear{}
\pubyear{2026}
\vol{}
\issue{}
\access{}
\appnotes{Preprint}
\firstpage{1}

\title[Hosmer--Lemeshow after shrinkage]{Shrinkage invalidates the Hosmer--Lemeshow
test: goodness of fit for penalized logistic regression, with an application to glaucoma
diagnosis}

\author[1,$\ast$]{Ebrahim Khaled Ebrahim}

\address[1]{\orgdiv{Department of Applied Statistics},
\orgname{Alexandria University}, \orgaddress{\country{Egypt}}}

\corresp[$\ast$]{Corresponding author. \href{mailto:ebrahimkhaled@alexu.edu.eg}
{ebrahimkhaled@alexu.edu.eg}}

\abstract{Clinical prediction models are increasingly fitted by penalized logistic
regression, because collinearity or many candidate predictors makes maximum likelihood
unstable or impossible. Calibration is then almost always assessed by a grouped
goodness-of-fit test such as the Hosmer--Lemeshow test. We show that this combination is
invalid. Under ridge regression the grouped standardized residuals acquire a
non-centrality induced by shrinkage, so the reference distribution used in practice is
wrong, and at the penalty that most improves the fitted probabilities the test rejects
correctly specified models between 92 and 100 per cent of the time. We derive the
corrected law and define the shrinkage-corrected Hosmer--Lemeshow test, which subtracts
an estimate of that non-centrality, restoring the maximum likelihood reference exactly to
first order, and is made valid by prepivoting at a power cost we measure. We also give the attenuation law governing what
any such test can detect once the linear predictor must be estimated. In glaucoma
diagnosis by confocal laser tomography, where the maximum likelihood estimate does not
exist, the corrected test finds the evidence for misfit weaker by more than three orders
of magnitude: the fitted risks are too flat rather than mis-ordered, so the model needs
recalibration rather than rebuilding.}

\keywords{calibration; clinical prediction model; glaucoma; goodness-of-fit;
Hosmer--Lemeshow test; ridge regression}

\maketitle

\section{Introduction}\label{sec:intro}

Glaucoma is a leading cause of \emph{irreversible} blindness worldwide, and it is
insidious: the disease usually progresses without symptoms until a substantial part of
the visual field has already been lost, and what has been lost cannot be recovered
\citep{coan2023}. Everything therefore depends on detecting the disease early, while
treatment can still preserve the sight that remains. Diagnosis rests largely on the
appearance of the optic nerve head, and expert assessment of it is subjective, slow and
expensive \citep{coan2023}, which is why quantitative imaging and statistical
classification have been pursued for three decades.

We consider a diagnostic model of exactly this kind, built on confocal laser scanning
tomography of the optic nerve head. The data comprise $n = 196$ eyes, 98 with glaucoma
and 98 normal, each summarized by $p = 62$ morphometric variables: areas, volumes,
depths and slopes of the optic disc, measured globally and within six sectors, together
with height variation and cup shape measures. The clinical question is the ordinary one.
Given these measurements for a new eye, what is the probability that it is glaucomatous,
and can that probability be trusted enough to act on?

Anyone who develops clinical prediction models will recognize the sequence that follows,
because it is the one the reporting guidelines \citep{collins2015} and the sample-size
literature \citep{riley2021} prescribe:
penalize when the number of candidate predictors is large relative to the number of
events, tune the penalty by cross-validation, then report calibration, almost always with
a calibration plot and a Hosmer--Lemeshow test. This paper is about that sequence. Its
last step is not valid after its second, so a model reported as badly calibrated in a
development paper may have been condemned for using the estimator that the same guidance
recommended.

\subsection{Non-existence of the maximum likelihood estimate}\label{sec:cannotfit}

The variables are repeated geometric summaries of the same optic disc, so they are
collinear by construction rather than by accident: the largest pairwise correlation is
$0.996$ and the condition number of the cross-product matrix of the standardized
predictors is $5.0 \times 10^{7}$.
With $p/n = 0.32$ and collinearity of this order, the logistic likelihood is maximized
on a boundary; the iteratively reweighted least squares algorithm does not converge, and
the maximum likelihood estimate does not exist.

This is worth stating plainly because it removes a choice that is usually available. The
analyst cannot fit this model by maximum likelihood and then decide whether penalization
would be an improvement. Penalization is the only route to a fitted model at all. Ridge
estimation for logistic regression, in the form we use, is due to
\citet{lecessie1992}; with the penalty chosen by ten-fold cross-validation it gives a model that
discriminates well, with an area under the receiver operating characteristic curve of
$0.905$, and which a clinician might reasonably consider using.

\subsection{Assessment by the Hosmer--Lemeshow test}\label{sec:routinecheck}

Having fitted the model, the analyst asks whether it fits. In clinical prediction the
near-universal answer is a grouped goodness-of-fit test, and in practice almost always
the Hosmer--Lemeshow test \citep{hosmer1980,copas1989}, whose standing rests largely on
the comparison study of \citet{hosmer1997}:
sort the patients by predicted risk, cut them into ten groups, and compare the number of
events observed in each group with the number the model expected. For these data that test
returns $p < 10^{-5}$ at the penalty cross-validation selects, and it rejects at the
$5\%$ level in every one of the twelve penalty-and-grouping configurations we examined,
when the grouped statistic is referred to the maximum likelihood covariance --- the
sharper of the two forms in use, and the one we compare against throughout
(Section~\ref{sec:corrected}). The textbook $\chi^2_{G-2}$ form rejects in eleven of the
twelve.

Read in the ordinary way, that is a verdict: the model is badly misspecified, its
predicted probabilities cannot be believed, and it should be rebuilt before anyone
relies on it.

We show in this paper that the verdict is not available, because the test that produced
it is not valid for the model that was fitted. Its reference distribution is derived
under maximum likelihood estimation --- the very thing these data do not admit --- and
under penalized estimation it is wrong in a specific and quantifiable way. At the
penalty that cross-validation selects, that test rejects models we know to be correct
with probability approaching one.

\subsection{Relevance beyond a single dataset}\label{sec:whymatters}

Penalized regression is recommended whenever the number of candidate predictors is large
relative to the number of events \citep{riley2021,pavlou2024}, and the resulting model is
then assessed for calibration in the usual way. Calibration --- the agreement between predicted
probabilities and observed event rates --- deserves that attention, because a model that
discriminates well but is miscalibrated can still be misleading, and can be actively
harmful when used to guide decisions \citep{vancalster2019}. Miscalibration always
reduces the net benefit of using a model, and for some risk thresholds it makes a model
worse than treating everyone or no one \citep{vancalster2015}. Across $158$ external validations of $104$ cardiovascular prediction models, $91\%$
carried a risk of harm at some plausible threshold \citep{gulati2022}.

So the practice is right to check calibration. Our claim is narrower and, we think, more
awkward: the standard check does not do what it is believed to do once the model has
been penalized, and the direction of the error is systematic. It manufactures evidence
of misfit out of the estimator rather than the model.

It may be objected that the Hosmer--Lemeshow test is no longer the right diagnostic
anyway, and that a flexible calibration curve with its slope and intercept is to be
preferred. That objection is well taken and it does not dissolve the problem. The
displacement described here is a property of the estimator, not of the grouping: it is
visible in the calibration slope of the glaucoma fit, which is $2.43$ --- an apparent
miscalibration produced entirely by the penalty and not by any defect of the model. We
treat the grouped test because it is the diagnostic that carries a reference
distribution, and therefore the one whose $p$-value can be wrong; a calibration curve
inherits the same displacement but reports it as a picture rather than as a verdict. The
same is true of the smoothing-based test of \citet{lecessie1991}, which is a curve with a
reference distribution attached and would inherit the displacement in both roles.

The paper is organized as follows. Section~\ref{sec:invalid} shows why the reported
$p$-value is not valid and separates two null hypotheses that coincide under maximum
likelihood but not under penalization. Section~\ref{sec:valid} derives the corrected
reference distribution and obtains a valid procedure. Section~\ref{sec:scope} gives the
law governing what such a test can and cannot detect once the linear predictor must be
estimated. Section~\ref{sec:sim} reports the simulation evidence, and
Section~\ref{sec:app} returns to the glaucoma model and states what the corrected
analysis concludes.

\section{Invalidity of the uncorrected test}\label{sec:invalid}

\subsection{Notation}\label{sec:notation}

Let $y_i \in \{0,1\}$ and $\boldsymbol{x}_i \in \mathbb{R}^{p}$, $i = 1,\dots,n$, and
write $X$ for the $n \times (p+1)$ design matrix including the intercept column. The
logistic model specifies
$\pi_i(\bb) = \{1 + \exp(-\boldsymbol{x}_i^{\top}\bb)\}^{-1}$. Penalized estimation
solves
\begin{equation}\label{eq:pen}
  \bbh = \argmax_{\bb}
  \bigl\{ \ell(\bb) - \tfrac{1}{2}\lambda\,\bb^{\top}D\bb \bigr\},
  \qquad D = \mathrm{diag}(0,1,\dots,1),
\end{equation}
the intercept being left unpenalized. The scale of $\lambda$ matters and is a common
source of confusion: the widely used implementation of \citet{friedman2010} minimizes
$-n^{-1}\ell(\bb) + \tfrac12\lambda_{\mathrm{g}}\|\bb_{-0}\|^{2}$, so that
$\lambda = n\lambda_{\mathrm{g}}$. We verified this identity numerically to
$1.6 \times 10^{-9}$ and report both scales throughout.

Write $W = \mathrm{diag}\{\hat\pi_i(1-\hat\pi_i)\}$, $F = X^{\top}WX$, $K = \lambda D$
and $M = F + K$. Let $C$ be the $G \times n$ matrix of indicators of the $G$
equal-frequency groups of the fitted probabilities, $V = CWC^{\top}$, and
\begin{equation}\label{eq:r}
  \boldsymbol{r} = V^{-1/2}C(\boldsymbol{y} - \hat{\boldsymbol{\pi}}), \qquad
  U = V^{-1/2}CWX .
\end{equation}
The Hosmer--Lemeshow statistic is $\|\boldsymbol{r}\|^{2}$ up to the choice of
denominator.

Two bases for the grouped residuals appear throughout, and everything we prove applies to
both. The \emph{decile} statistic is $\|\boldsymbol{r}\|^{2}$ itself, which spends all $G$
directions equally. The \emph{EDGE} statistic of \citet{ebrahim2026edge} instead projects
the residual vector onto the orthogonal polynomials of degree one to three in the
group-mean fitted probability: writing $Z$ for the resulting $G \times 3$ matrix, it is
\begin{equation}\label{eq:edge}
  (Z^{\top}\boldsymbol{r})^{\top}(Z^{\top}Z)^{-1}(Z^{\top}\boldsymbol{r}) .
\end{equation}
It spends three directions rather than $G$, and it is aimed at smooth departures along
the fitted index rather than at arbitrary ones. We report both because they behave
differently in high dimension, and that difference is itself informative.

\subsection{Two null hypotheses that separate under penalization}\label{sec:twonulls}

Two distinct questions are being asked, and ordinary practice does not distinguish them
because ordinarily they cannot be distinguished. Write
\begin{align}
  H_{0}^{\mathrm{str}} &: \ \text{the logistic structure is correct, that is }
    \Pr(y_i = 1 \mid \boldsymbol{x}_i) = \pi_i(\bb_0)
    \text{ for some } \bb_0; \label{eq:h0str}\\
  H_{0}^{\mathrm{cal}} &: \ \text{the deployed predictor is calibrated, that is }
    \mathbb{E}(y_i \mid \hat\pi_i) = \hat\pi_i. \label{eq:h0cal}
\end{align}
Under maximum likelihood at fixed dimension, $\bbh \to \bb_0$, so \eqref{eq:h0str}
implies \eqref{eq:h0cal} asymptotically and the distinction is empty. The qualification
matters and we return to it in Section~\ref{sec:howlarge}(c): once $p/n$ is bounded away
from zero the maximum likelihood estimator is itself displaced, and the two hypotheses
separate even without a penalty. Under penalization it is not:
$\bbh$ converges to a shrunken pseudo-parameter, so a model whose structure is exactly
right delivers a predictor that is exactly \emph{mis}calibrated, by an amount the analyst
chose when selecting $\lambda$.

The second hypothesis is calibration in the sense of \citet{dawid1982}, and sits at the
``moderate'' level of the hierarchy of \citet{vancalster2016}, which separates mean
calibration and calibration in the large from the conditional statement in
\eqref{eq:h0cal}. This matters for reading the output. A test that rejects has told us that
\eqref{eq:h0cal} fails; practitioners read it as evidence that \eqref{eq:h0str} fails.
Under penalization those are different statements, and the whole difficulty of this
paper lies in that gap. Our object is a valid test of \eqref{eq:h0str}. A test of
\eqref{eq:h0cal} is also useful --- it is what one wants before deployment --- but it is
answered by recalibration, not by rebuilding the model.

\subsection{The corrected null distribution}\label{sec:law}

\begin{proposition}\label{prop:law}
Under the regularity conditions given in the supplementary material, with the penalty
contributing $K \succeq 0$ to the curvature,
\[
  \bbh - \bb = M^{-1}\{X^{\top}(\boldsymbol{y}-\boldsymbol{\pi}) - K\bb\} + o_p(n^{-1/2}),
\]
and consequently
$\boldsymbol{r} \adist \mathcal{N}(\boldsymbol{\mu}_K, \Omega_K)$ with
\begin{equation}\label{eq:omega}
  \Omega_K = I_G - UM^{-1}U^{\top} - UM^{-1}KM^{-1}U^{\top}
           = I_G - UM^{-1}(F + 2K)M^{-1}U^{\top},
\end{equation}
\begin{equation}\label{eq:mu}
  \boldsymbol{\mu}_K = UM^{-1}K\bb + \text{(second-order remainder)}.
\end{equation}
\end{proposition}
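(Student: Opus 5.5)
The plan is to linearize the penalized score equation about the true parameter $\bb$ (the $\bb_0$ of \eqref{eq:h0str}). Then substitute the resulting expansion of $\hat{\boldsymbol{\pi}}$ into the grouped residual \eqref{eq:r}. Finally, read off the mean and covariance of the linear term.

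\emph{Step 1: linearize the estimating equation.} The estimator \eqref{eq:pen} satisfies $X^{\top}(\by-\hat{\bpi}) - K\bbh = 0$. A first-order Taylor expansion of $\hat{\bpi}$ about $\bb$ gives $\hat{\bpi} = \bpi + WX(\bbh-\bb) + R_n$. Here $R_n$ is quadratic in $\bbh-\bb$, and $W$ may be evaluated at $\bb$ or at $\bbh$ interchangeably to this order. Substituting, we obtain
\[
X^{\top}(\by-\bpi) - K\bb - (F+K)(\bbh-\bb) - X^{\top}R_n = 0,
\]
and solving for $\bbh-\bb$ gives the displayed expansion with $M=F+K$. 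The regularity conditions must ensure three things. First, $n^{-1}F$ converges to a positive definite limit. Second, $\|\bbh-\bb\| = O_p(n^{-1/2})$, which requires $\lambda = O(n^{1/2})$, so that $M^{-1}K\bb$ is itself $O(n^{-1/2})$. Third, $M^{-1}X^{\top}R_n = o_p(n^{-1/2})$.

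\emph{Step 2: propagate to the grouped residual.} Write $\by-\hat{\bpi} = (\by-\bpi) - WX(\bbh-\bb) - R_n$. Then
\[
\boldsymbol{r} = V^{-1/2}C(\by-\bpi) - UM^{-1}X^{\top}(\by-\bpi) + UM^{-1}K\bb + o_p(1).
\]
The grouping matrix $C$ depends on $\hat{\bpi}$. I would handle this as in the standard Hosmer--Lemeshow asymptotics: the group boundaries converge to fixed quantiles, and the random-cell effect is asymptotically negligible. I will cite that argument from the supplement rather than reprove it.

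The deterministic part $UM^{-1}K\bb$ gives $\boldsymbol{\mu}_K$. The quadratic remainder is what \eqref{eq:mu} calls second order.

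\emph{Step 3: compute the covariance.} Set $\boldsymbol{e}=\by-\bpi$, which has covariance $W$. Write $A = V^{-1/2}C$ and $B = UM^{-1}X^{\top}$. Then:
\begin{itemize}
\item $AWA^{\top} = I_G$;
\item $AWB^{\top} = V^{-1/2}CWXM^{-1}U^{\top} = UM^{-1}U^{\top}$;
\item $BWB^{\top} = UM^{-1}FM^{-1}U^{\top}$.
\end{itemize}
Hence
\[
\Omega_K = I_G - 2UM^{-1}U^{\top} + UM^{-1}FM^{-1}U^{\top}.
\]
Using $F = M - K$, we have $UM^{-1}FM^{-1}U^{\top} = UM^{-1}U^{\top} - UM^{-1}KM^{-1}U^{\top}$. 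Substituting this in gives the first form of \eqref{eq:omega}. Writing $M^{-1} = M^{-1}(F+K)M^{-1}$ then gives the second form.

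Asymptotic normality follows from a Lindeberg central limit theorem for the fixed linear map $(A-B)\boldsymbol{e}$ of independent bounded variables. A Slutsky argument absorbs the replacement of $W$ and $C$ by their estimated versions.

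\emph{Main obstacle.} The hard step is controlling the remainders uniformly enough that the bias $UM^{-1}K\bb$ is exactly of the order of the noise, neither negligible nor dominant. This requires $\lambda$ to scale like $n^{1/2}$. I would state the result under that local scaling. I would then check that the data-dependent cell boundaries contribute only $o_p(1)$ terms, by showing that boundary-crossing indicators change $C$ on a vanishing fraction of observations.
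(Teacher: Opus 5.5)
Your route is the paper's own. You linearize the penalized score about $\bb$ and push the expansion through $U$ to get $\boldsymbol{r} = A(\by-\bpi) + UM^{-1}K\bb + o_p(1)$ with $A = V^{-1/2}C - UM^{-1}X^{\top}$. That is the decomposition the paper reuses in proving Proposition~\ref{prop:cancel}. You then evaluate $AWA^{\top}$ with $V^{-1/2}CWC^{\top}V^{-1/2} = I_G$, $V^{-1/2}CWX = U$ and $X^{\top}WX = F$, and your reduction through $F = M - K$ to both forms of \eqref{eq:omega} is correct.

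Where you part company with the paper is the scaling, and your claim there is too strong. The paper's conditions ask only for $\lambda = o(n)$ (Section~\ref{sec:disc}(a)), not $\lambda \asymp n^{1/2}$. The restriction is not harmless. Since $\|\boldsymbol{\mu}_K\|$ is of order $\lambda n^{-1/2}$, your local scaling keeps the non-centrality bounded. The proposition would then only ever predict an uncorrected size bounded away from one, and never reach the regime $\lambda/n^{1/2} \to \infty$ in which that size tends to one.

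What the argument needs is not $\bbh - \bb = O_p(n^{-1/2})$, but a split of $\bbh - \bb$ into a random $O_p(n^{-1/2})$ part and a deterministic bias. To get it, expand about the pseudo-true value $\bb_\lambda$ defined by $X^{\top}\{\bpi(\bb) - \bpi(\bb_\lambda)\} = K\bb_\lambda$.
\begin{itemize}
\item Then $X^{\top}\{\by - \bpi(\bb_\lambda)\} - K\bb_\lambda = X^{\top}(\by-\bpi)$ exactly. The penalized score at $\bb_\lambda$ is therefore a mean-zero $O_p(n^{1/2})$ vector, and the curvature is of order $n$.
\item Hence $\bbh - \bb_\lambda = O_p(n^{-1/2})$, and the Taylor remainder about $\bb_\lambda$ is $O_p(n^{-1})$.
\item Separately, a deterministic expansion gives $\bb_\lambda - \bb = -M^{-1}K\bb + O\{(\lambda/n)^{2}\}$ when $\lambda = o(n)$.
\end{itemize}
Carried through $U$, the $O\{(\lambda/n)^{2}\}$ term has size $n^{1/2}(\lambda/n)^{2}$, which is smaller than $\|\boldsymbol{\mu}_K\|$ by the factor $\lambda/n$. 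That is the second-order remainder in \eqref{eq:mu}. It is also why the paper carries that term in the mean: the $o_p(n^{-1/2})$ in the displayed expansion is literally true only while $(\lambda/n)^{2} = o(n^{-1/2})$, that is, for $\lambda = o(n^{3/4})$.

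One smaller point concerns the random-cell step: a vanishing \emph{fraction} of switched observations is not enough. There are $O_p(n^{1/2})$ such observations, and a crude bound lets them contribute $O_p(1)$ after the $V^{-1/2}$ scaling. You also need the cancellation from $\mathbb{E}(y_i - \pi_i \mid \boldsymbol{x}_i) = 0$ to hold uniformly over data-dependent cell boundaries. That is what a stochastic-equicontinuity (random-cells) argument supplies.
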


Two things change at once. They push the test the same way, towards over-rejection, but
they are of very different sizes.

The covariance moves away from its maximum likelihood value $I_G - UF^{-1}U^{\top}$, and
it moves \emph{upward}. Writing $M = F + K$ and using
$F^{-1} - M^{-1} = M^{-1}KF^{-1}$ together with
$M^{-1}(F+2K)M^{-1} = M^{-1} + M^{-1}KM^{-1}$,
\begin{equation}\label{eq:omegaorder}
  F^{-1} - M^{-1}(F+2K)M^{-1} = M^{-1}K\bigl(F^{-1} - M^{-1}\bigr)
  = M^{-1}KF^{-1}KM^{-1} \succeq 0,
\end{equation}
so $\Omega_K \succeq I_G - UF^{-1}U^{\top}$. The classical covariance \emph{understates}
the residual variance once a penalty is applied, which by itself would make a test
referred to it anti-conservative. The effect is real but small: the supplementary
material measures $\tr(\Omega_K)$ exceeding $\tr(I_G - UF^{-1}U^{\top})$ by $0.23$ in
design~A and by $0.92$ in design~B, out of ten groups. A direct decomposition confirms
that it is not what breaks the test: correcting the covariance alone, leaving the
non-centrality in place, moves the rejection rate from $0.994$ to $0.992$ at
$\lambda = 137$ in design~A. It also needs no separate remedy, because
Proposition~\ref{prop:cancel} shows that once the non-centrality is subtracted the
covariance returns to $I_G - UF^{-1}U^{\top}$ exactly.

What is left, and what drives the distortion, is \eqref{eq:mu}. We have
$\mathbb{E}(\boldsymbol{r}) \neq \boldsymbol{0}$ whenever $K\bb \neq \boldsymbol{0}$, so
the correct reference is a \emph{non-central} weighted chi-squared distribution.
Shrinkage displaces the grouped residuals systematically, in a direction fixed by the
penalty, and the classical test reads that displacement as evidence against the model.
The displacement grows with $\lambda$; the classical critical value does not.

\begin{remark}\label{rem:cl}
This is not the Chernoff--Lehmann problem \citep{chernoff1954,moore1975}, in which
estimation shifts the \emph{null} distribution of a chi-squared statistic while leaving
it centred. Here the null distribution acquires a non-zero mean. Moore and Spruill's
unified theory assumes a $\sqrt{n}$-consistent estimator satisfying an orthogonality
condition; a penalized estimator with $\lambda$ growing satisfies neither, which is
precisely why their conclusions do not transfer.
\end{remark}

\subsection{Magnitude of the size distortion}\label{sec:howlarge}

\begin{figure}[!t]
\centering
\includegraphics[width=\textwidth]{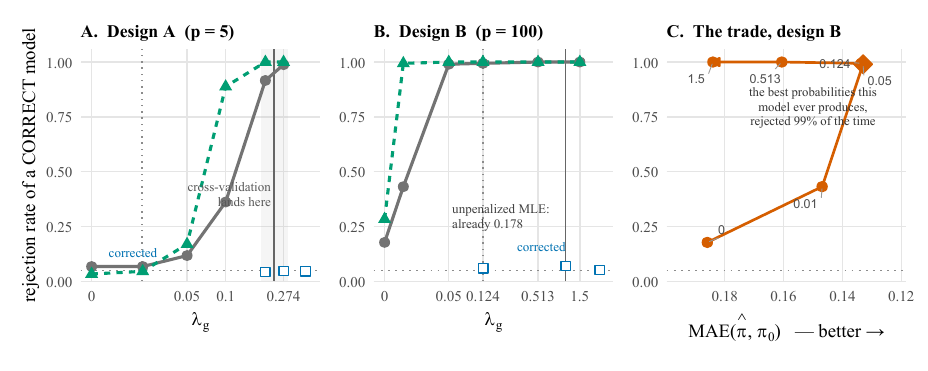}
\caption{Rejection rate of a correctly specified model against the ridge penalty, for design~A
(A) and design~B (B), and against the mean absolute error of the fitted probabilities in
design~B (C). In A and B the solid grey curve is the decile basis and the dashed green
curve the EDGE basis; the dotted and solid vertical rules mark $\hat\lambda_{\min}$ and
$\hat\lambda_{1\mathrm{se}}$, and the shaded band in~A is the interquartile range of
$\hat\lambda_{1\mathrm{se}}$ over replications. Open blue squares are the
shrinkage-corrected test, shown as isolated points because they come from a separate
experiment. In~C the six points are the penalties of design~B in order, the error axis
reversed so that better probabilities lie to the right, and the diamond marks the
smallest error. Based on $500$ replications, Monte Carlo standard error at most $0.022$;
the grid is tabulated in the supplementary material.}
\label{fig:penalty}
\figalttext[Rejection rate of a correct model rises to one along the ridge penalty path,
while the error of the fitted probabilities falls]{Three panels. Panels A and B plot the
rejection rate of a correctly specified model against the ridge penalty on a
pseudo-logarithmic axis, for designs A and B; both the decile and EDGE curves rise from
near the nominal level to 1.00, and vertical rules mark where cross-validation lands,
which in design~A is beyond the point of failure. Isolated open squares near 0.05 show
the corrected test. Panel C plots the same design~B rejection rate against the mean
absolute error of the fitted probabilities, with the error axis reversed so that better
probabilities lie to the right; the trajectory runs up and to the right, and a diamond
marks the penalty at which the probabilities are most accurate and the test rejects
99\% of the time.}
\end{figure}

Four features of Figure~\ref{fig:penalty} deserve comment.

(a) The distortion is not a small-sample curiosity; it reaches certainty. A test whose
size is $1.000$ carries no information at all: its $p$-value is a deterministic function
of the penalty, not of the data.

(b) It occurs at the penalties cross-validation actually chooses. Over $500$
replications the one-standard-error rule exceeded $\lambda_{\mathrm{g}} = 0.05$ in every
replication of both designs, and $\lambda_{\mathrm{g}} = 0.05$ is already the column at
which design~B has reached $0.990$.

(c) In design~B the first column, which is the \emph{unpenalized} fit, is already
invalid, at $0.178$ and $0.284$. This is the phenomenon described by \citet{sur2019} and
\citet{candes2020}: at $p/n$ bounded away from zero the maximum likelihood estimator is
itself biased away from the truth, and \citet{surchen2019} show that in the same regime
the likelihood ratio statistic is no longer asymptotically $\chi^{2}$ but a rescaled one,
so a wrong chi-squared reference in high dimension is not peculiar to grouped tests. It
means no unpenalized comparator is available in that regime, and so a correction cannot be avoided by declining to penalize.

(d) The distortion is worst exactly where the penalty is doing the most good. In
design~B the mean absolute error of the fitted probabilities, $\mathrm{MAE}(\hat{\bpi},
\bpi_0)$, falls from $0.1859$ at $\lambda_{\mathrm{g}} = 0$ to a minimum of $0.1330$ at
$\lambda_{\mathrm{g}} = 0.05$, an improvement of $28\%$; over the same stretch of the
penalty path the Hosmer--Lemeshow rejection rate of a model \emph{known to be correct}
rises from $0.178$ to $0.990$. The penalty that produces the best probabilities this
model will ever produce is the penalty at which the standard check condemns it. We state
this for design~B, where the grid is available; and we note that
$\mathrm{MAE}(\hat{\bpi}, \bpi_0)$ is a simulation quantity, computable only because
$\bpi_0$ is known, which is precisely why the practitioner reaches for a
goodness-of-fit test instead.

\subsection{A one-fit diagnostic screen}\label{sec:screen}

Everything needed to predict the distortion is already computed inside the correction, so
a reader can find out whether their own analysis is affected without running the
procedure at all. Write
\begin{equation}\label{eq:screen}
  N = \|\hat{\boldsymbol{\mu}}\|^{2}, \qquad
  d = \tr\bigl(I_G - UF^{-1}U^{\top}\bigr),
\end{equation}
the squared shrinkage non-centrality and the effective dimension of the grouped residual
vector. The uncorrected statistic is then a non-central weighted chi-squared variable
being referred to a central one, and its size follows by evaluating the two weighted
forms; the eigenvalues of $\Omega_K$ and $I_G - UF^{-1}U^{\top}$ are all that is needed.
This costs one fit and no refits, which is about $150$ times cheaper than the corrected
test itself.

The useful summary is a comparison of two numbers. Across the twelve configurations of
Figure~\ref{fig:penalty}, the uncorrected test rejects a correct model more often than
not once $N$ exceeds $d$, and this rule is right in eleven of the twelve. Averaged over
$200$ designs per cell, the predicted sizes track the measured ones with a mean absolute
error of $0.069$.

Two features of \eqref{eq:screen} explain why dimension is so damaging. $N$ grows like
$\lambda^{2}$ for small $\lambda$ and saturates at $\|U\bb\|^{2}$; and $d$ \emph{falls}
as $p/n$ rises, from $7.97$ in design~A to about $6.0$ in design~B. Dimension moves both
terms the wrong way at once, which is the analytic reason design~B is catastrophic where
design~A is merely bad.

We are explicit about what this screen does not do. It answers ``do I need to worry?'',
not ``is my model wrong?'', and it does not replace the corrected test. Its approximation
is first order, and in design~A at heavy shrinkage it is optimistic --- $0.76$ against a
measured $0.92$. More importantly it cannot see the \citet{sur2019} offset at all: with
no penalty $N = 0$, so it predicts exactly $0.05$ where design~B in fact rejects $17.8\%$
of correct models. The screen therefore \emph{understates} the problem in high dimension
and never overstates it, which is the safe direction for a device whose purpose is to
raise an alarm.

\section{The shrinkage-corrected Hosmer--Lemeshow test}\label{sec:valid}

\subsection{Removal of the shrinkage displacement}\label{sec:remove}

Proposition~\ref{prop:law} suggests an obvious remedy: estimate the shrinkage
non-centrality $\boldsymbol{\mu}_K$ and subtract it. We call the resulting procedure the
\emph{shrinkage-corrected Hosmer--Lemeshow test}, writing \texttt{SC.HL} for it on the
decile grouping and \texttt{SC.EDGE} for it on the EDGE basis of
Section~\ref{sec:notation}. These are one procedure under two choices of basis, not two
tests. The remedy turns out to do more than it was asked to.

\begin{proposition}\label{prop:cancel}
Let $\bbt = \bbh + F^{-1}K\bbh$ and $\hat{\boldsymbol{\mu}} = UM^{-1}K\bbt$. Then, to
first order,
\[
  \var(\boldsymbol{r} - \hat{\boldsymbol{\mu}}) = I_G - UF^{-1}U^{\top},
\]
the ordinary maximum likelihood covariance.
\end{proposition}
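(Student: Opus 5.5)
The plan is to linearize both $\boldsymbol{r}$ and $\hat{\boldsymbol{\mu}}$ in the same score vector $\boldsymbol{s} = X^{\top}(\boldsymbol{y}-\boldsymbol{\pi})$, where $\boldsymbol{\pi} = \boldsymbol{\pi}(\bb)$ at the true parameter. I would then show that the shrinkage terms cancel exactly, leaving the residual in the same linear form it has under maximum likelihood.

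\emph{Step 1: the debiased estimator.} The key observation is that $\bbt = \bbh + F^{-1}K\bbh = F^{-1}(F+K)\bbh = F^{-1}M\bbh$. Substituting the expansion of Proposition~\ref{prop:law}, $M(\bbh - \bb) = \boldsymbol{s} - K\bb + o_p(n^{-1/2})$, gives
\[
  M\bbh = F\bb + \boldsymbol{s} + o_p(n^{-1/2}),
\]
and hence
\[
  \bbt = \bb + F^{-1}\boldsymbol{s} + o_p(n^{-1/2}).
\]
So $\bbt$ is, to first order, the one-step maximum likelihood estimator. Consequently
\[
  \hat{\boldsymbol{\mu}} = UM^{-1}K\bb + UM^{-1}KF^{-1}\boldsymbol{s} + o_p(1).
\]

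\emph{Step 2: linearize $\boldsymbol{r}$.} Write $\hat{\bpi} \approx \bpi + WX(\bbh-\bb)$ and set $\boldsymbol{e} = V^{-1/2}C(\boldsymbol{y}-\bpi)$. Then
\[
  \boldsymbol{r} = \boldsymbol{e} - UM^{-1}(\boldsymbol{s} - K\bb) + o_p(1).
\]
This is the representation already underlying Proposition~\ref{prop:law}. Subtracting Step~1, the deterministic terms $\pm UM^{-1}K\bb$ cancel, which removes the mean. The stochastic terms combine through $I + KF^{-1} = (F+K)F^{-1} = MF^{-1}$:
\[
  \boldsymbol{r} - \hat{\boldsymbol{\mu}}
  = \boldsymbol{e} - UM^{-1}\bigl(I + KF^{-1}\bigr)\boldsymbol{s}
  = \boldsymbol{e} - UF^{-1}\boldsymbol{s} + o_p(1).
\]

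\emph{Step 3: compute the variance.} Since $\var(\boldsymbol{y}-\bpi) = W$, we have $\var(\boldsymbol{e}) = V^{-1/2}CWC^{\top}V^{-1/2} = I_G$, $\var(\boldsymbol{s}) = F$, and $\mathrm{cov}(\boldsymbol{e},\boldsymbol{s}) = V^{-1/2}CWX = U$. Therefore
\[
  \var(\boldsymbol{r} - \hat{\boldsymbol{\mu}})
  = I_G - 2UF^{-1}U^{\top} + UF^{-1}FF^{-1}U^{\top}
  = I_G - UF^{-1}U^{\top}.
\]
This is the claimed maximum likelihood covariance, and it comes with a zero first-order mean as a bonus.

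\emph{Main obstacle.} The algebra is exact; the delicate part is justifying the $o_p$ remainders. Three points need care.
\begin{enumerate}
\item $W$, $F$, $M$ and $U$ are evaluated at $\bbh$ rather than $\bb$, and the equal-frequency groups $C$ are themselves data-dependent.
\item With $\lambda$ possibly growing, $\bbh - \bb$ is not $O_p(n^{-1/2})$, so plugging estimated matrices into $\hat{\boldsymbol{\mu}}$ multiplies a non-negligible bias $K\bb$ by a matrix error.
\item The argument needs that error to be $o_p(1)$ after the $V^{-1/2}$ scaling.
\end{enumerate}
I would treat these under the supplementary regularity conditions: require $\|M^{-1}K\bb\|$ to be bounded, so that the second-order remainder flagged in \eqref{eq:mu} is of smaller order. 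I would handle the random grouping by the standard argument that group boundaries converge, so that $C$ may be replaced by its limit at first order.
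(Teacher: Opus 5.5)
Your proposal is correct and follows the paper's proof essentially step for step. It writes $\bbt = F^{-1}M\bbh$ and substitutes the expansion of Proposition~\ref{prop:law}, linearizes $\boldsymbol{r}$ so that the deterministic term $UM^{-1}K\bb$ cancels, collapses the stochastic terms through $M^{-1}(I+KF^{-1}) = F^{-1}$, and then computes the variance of $V^{-1/2}C(\by-\bpi) - UF^{-1}X^{\top}(\by-\bpi)$. One bookkeeping slip does not affect the result: since $M$ is of order $n$, your intermediate display should read $M\bbh = F\bb + \boldsymbol{s} + o_p(n^{1/2})$, not $o_p(n^{-1/2})$; premultiplying by $F^{-1}$, with $F^{-1}M = I + F^{-1}K$ bounded under $\lambda = o(n)$, restores the $o_p(n^{-1/2})$ remainder you state for $\bbt$.
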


We give this proof in the text, rather than in the supplement with the others, because it
is short and because it is the reason the method needs no new reference distribution.

\begin{proof}
Since $I + F^{-1}K = F^{-1}(F+K) = F^{-1}M$, the debiased estimate is
$\bbt = F^{-1}M\bbh$. Substituting the expansion of Proposition~\ref{prop:law} and using
$F^{-1}M\bb = \bb + F^{-1}K\bb$, the two penalty terms cancel and
\begin{equation}\label{eq:btilde}
  \bbt = \bb + F^{-1}X^{\top}(\by-\bpi) + o_p(n^{-1/2}),
\end{equation}
which is the first-order expansion of the \emph{unpenalized} maximum likelihood
estimator in the regime of the regularity conditions. Where that estimate does not exist,
as on the data of Section~\ref{sec:intro}, $\bbt$ should be read not as an estimator in
its own right but as a one-step undoing of the shrinkage the penalty imposed. The same expansion writes the grouped
residuals as $\boldsymbol{r} = A(\by-\bpi) + UM^{-1}K\bb + o_p(1)$ with
$A = V^{-1/2}C - UM^{-1}X^{\top}$. Subtracting $\hat{\boldsymbol{\mu}} = UM^{-1}K\bbt$
removes the non-random term exactly and leaves
\[
  \boldsymbol{r} - \hat{\boldsymbol{\mu}}
  = \bigl\{V^{-1/2}C - UM^{-1}(I + KF^{-1})X^{\top}\bigr\}(\by-\bpi) + o_p(1).
\]
The cancellation is now algebraic:
\begin{equation}\label{eq:cancel}
  M^{-1}(I + KF^{-1}) = M^{-1}(F+K)F^{-1} = M^{-1}MF^{-1} = F^{-1}.
\end{equation}
Writing $A^{*} = V^{-1/2}C - UF^{-1}X^{\top}$ and using $\var(\by-\bpi) = W$ together
with $V^{-1/2}CWC^{\top}V^{-1/2} = I_G$, $V^{-1/2}CWX = U$ and $X^{\top}WX = F$ gives
$A^{*}WA^{*\top} = I_G - UF^{-1}U^{\top}$.
\end{proof}

Two things are worth drawing out. The estimation noise contributed by the plug-in exactly
cancels the penalization of the covariance, and \eqref{eq:cancel} shows this is an
identity rather than an approximation. So the practitioner does not have to learn a new
reference distribution: once the displacement is removed, the familiar one is correct
again. And \eqref{eq:btilde} says what $\bbt$ is --- the one-step correction that undoes,
to first order, exactly the shrinkage the penalty imposed. That is also the one place the
procedure is fragile, since it inverts $F$.

\begin{remark}\label{rem:smooth}
Nothing in either proposition uses the linearity of the ridge penalty. The proof needs
the penalty only through its contribution to the estimating equation, so if a smooth
penalty contributes a known deterministic $\boldsymbol{a}(\bb)$ with
$K(\bb) = -\partial \boldsymbol{a}/\partial\bb$, then
$\boldsymbol{\mu} = UM^{-1}\boldsymbol{a}(\bb)$ and
$\bbt = \bbh + F^{-1}\boldsymbol{a}(\bbh)$, and the cancellation \eqref{eq:cancel} goes
through unchanged, because it only ever used $M^{-1}(F+K)F^{-1} = F^{-1}$. Ridge is the
case $\boldsymbol{a}(\bb) = K\bb$. This covers the Liu estimator, the ridge part of the
elastic net, and any Bayesian maximum a posteriori fit with a smooth prior. It also
covers, formally, Firth's correction \citep{firth1993}, whose score contribution is the
Jeffreys term --- which matters here because Firth, not ridge, is the standard response
to separation in clinical prediction \citep{puhr2017}. We state that as a formal
consequence and not as a verified one: Firth's $\boldsymbol{a}(\bb)$ is not linear and
its $K(\bb)$ is not constant, so the same first-order remainder caveat applies with more
force, and we have not run the numerical check that would let us recommend it.
\end{remark}

There is also a reading of $\bbt$ that connects this construction to a literature it
might otherwise seem separate from. The penalized stationarity condition is
$X^{\top}(\by - \hat\bpi) = K\bbh$. Substituting that into the generic one-step debiased
estimator $\bbh + F^{-1}X^{\top}(\by - \hat\bpi)$ returns $\bbh + F^{-1}K\bbh$, which is
$\bbt$. So $\bbt$ is the debiased estimator of \citet{zhangzhang2014} and
\citet{vandegeer2014}, specialized to ridge --- and in the ridge case two things that are
approximations in that literature become exact: the score residual is a deterministic
function of $\bbh$, and $F^{-1}$ is available directly rather than through a
nodewise-lasso approximation. That is why the cancellation of
Proposition~\ref{prop:cancel} is an identity here and not an approximation.

It is worth saying how that fragility does and does not show up, because the obvious
diagnostic is the wrong one. On the glaucoma data the cross-product matrix has condition
number about $9\times10^{7}$, and $\|\bbt\|$ exceeds $\|\bbh\|$ by a factor between $618$
and $1637$ across the four penalties. Read as an estimate, $\bbt$ is meaningless there.
But $\bbt$ is never read as an estimate: it enters only through the generator
$\boldsymbol{\pi}(\bbt)$, and the generator is well behaved. At the cross-validated
penalty its fitted probabilities span $[0.0009, 0.997]$ with standard deviation $0.34$,
not one of the $196$ observations lies within $10^{-6}$ of $0$ or $1$, and it implies
$96.7$ events against the $98$ observed. The large components of $\bbt$ lie in directions
in which the design has almost no variance, so they inflate its norm and leave
$X\bbt$ almost unchanged. The quantity a practitioner should check before trusting this
procedure is therefore the spread of the generator, which is computable on any dataset,
and not the norm of $\bbt$.

\subsection{Prepivoting}\label{sec:prepivot}

The first-order correction is not enough at the penalties cross-validation selects. It
removes the anti-conservative failure but leaves the test conservative, because
$\hat{\boldsymbol{\mu}}$ is itself estimated at a penalty large enough that the
first-order account of its error is inadequate.

We therefore refer the corrected statistic to a reference built from itself. The
procedure is:

(a) fit the penalized model at $\lambda$ and compute
    $S = \|\boldsymbol{r} - \hat{\boldsymbol{\mu}}\|^{2}$;

(b) form the generator $\boldsymbol{\pi}(\bbt)$ from the debiased estimate;

(c) for $b = 1,\dots,B$, draw $\boldsymbol{y}^{*}_{b}$ from that generator, refit the
    penalized model \emph{at the same $\lambda$}, and recompute the \emph{same corrected}
    statistic $S^{*}_{b}$;

(d) report $\{1 + \#(S^{*}_{b} \geq S)\}/(B+1)$.

Because the statistic is corrected in both the real and the bootstrap world, the
reference inherits the same finite-sample imperfections as the observed value and
cancels them. This is the prepivoting principle of \citet{beran1987,beran1988}: one does
not need the statistic to be asymptotically pivotal, only to be referred to its own
sampling distribution under a generator that is correct to the order that matters.

Three parts of this procedure carry the argument, and two of them have a near neighbour
that fails. The generator must be $\boldsymbol{\pi}(\bbt)$: generating from
$\boldsymbol{\pi}(\bbh)$ builds the shrinkage into the null world, and the test becomes
conservative. The statistic must be corrected in both worlds, since an uncorrected
bootstrap statistic and a corrected observed one are not the same quantity. And we hold
$\lambda$ fixed in the bootstrap world, so that the null world's penalty is not itself
random. The catalogue of neighbouring constructions, with the size of each failure, is in
the supplementary material. It is also worth stating that an oracle which knows $\bb_0$
cannot be built on a bootstrap reference at all, because the truth of the bootstrap
world is $\bbt$ and not $\bb_0$; oracle comparisons in
Section~\ref{sec:scope} therefore take their critical value by direct Monte Carlo from
the true model instead.

\section{Detectable and undetectable departures}\label{sec:scope}

A valid test is not automatically a useful one. This section gives the law that governs
its power, and we present it as a contribution rather than as a limitation, because it
applies to every test in this family and not only to ours.

\subsection{The visible effect size}\label{sec:tau}

Grouping is performed on the \emph{fitted} linear predictor $\hat\eta$, so the only part
of a departure that a grouped test can see is the part that survives conditioning on
$\hat\eta$. For a departure $\gamma\,g(\boldsymbol{x})$ added to the linear predictor,
define the visible effect size
\begin{equation}\label{eq:tau}
  \tau = \gamma \, \frac{\sd[\mathbb{E}\{g(\boldsymbol{x}) \mid \hat\eta\}]}
                         {\sd(\hat\eta)} .
\end{equation}
The nominal coefficient $\gamma$ is not comparable across designs --- the same $\gamma$
means different things when $p = 5$ and when $p = 100$ --- whereas $\tau$ is. When power
is plotted against $\nu\sqrt{n}$, where $\nu = \tau\,\sd(\hat\eta)$, curves from designs
of very different dimension collapse onto one another
(Figure~\ref{fig:power}, panel~A; mean absolute gap $0.017$ against a Monte Carlo
standard error of $0.023$ for a single difference).

There is a subtlety worth making explicit, because it corrects an expectation one
naturally forms from Proposition~\ref{prop:atten}. One might expect a residual
dimensional offset to remain visible on a $\tau$ axis. It does not, and should not: the
numerator of \eqref{eq:tau} already contains the factor $\rho^{k}$, so $\tau$ absorbs
the attenuation by construction. The law of Section~\ref{sec:atten} is untouched; what
it governs is how much $\tau$ a given $\gamma$ buys, not the shape of the curve once
$\tau$ is on the axis.

There is also a limit beyond which \emph{any} grouped test may lose power as the
departure grows, and it is worth stating because we have not seen it stated. A cubic
departure in the index standardizes to
$Z + \mathrm{rel}\cdot(Z^{3}-3Z)/\sqrt{6}$ with $\mathrm{rel} = \gamma/\sd(\eta_0)$, whose
derivative at $Z = 0$ is $1 - 3\,\mathrm{rel}/\sqrt{6}$. Once
$\mathrm{rel} \geq \sqrt{6}/3 = 0.82$ the true event probability is no longer monotone in
the linear predictor, so sorting patients by predicted risk no longer sorts them by the
departure, and the power of any test that groups on the index can fall as the departure
grows. Of the cells in Figure~\ref{fig:power}, one point of the design~A cubic series
lies beyond this threshold ($\mathrm{rel} = 0.97$) and three of the design~B cubic series
do ($1.05$, $1.57$, $2.19$). Power nevertheless rises monotonically across the plotted
range, so the threshold bounds the interpretation rather than invalidating the curves.
This is a property of grouping on a monotone summary, shared by the Hosmer--Lemeshow
family and by every grouped test; we give its threshold because it is easy to cross
without noticing.

\subsection{The attenuation law}\label{sec:atten}

\begin{proposition}\label{prop:atten}
Write $\eta^{\ast}$ and $\hat\eta^{\ast}$ for the true and the fitted linear predictor,
each standardized to mean zero and unit variance, and suppose the pair is jointly
Gaussian with correlation $\rho$. Since
$\mathbb{E}\{\mathrm{He}_k(\eta^{\ast}) \mid \hat\eta^{\ast}\} =
\rho^{k}\mathrm{He}_k(\hat\eta^{\ast})$ for the Hermite polynomials $\mathrm{He}_k$, a
departure equal to the degree-$k$ component of the true linear predictor survives
grouping by the fitted predictor with amplitude $\rho^{k}$, and its non-centrality is
attenuated by $\rho^{2k}$.
\end{proposition}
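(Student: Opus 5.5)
The plan has two parts. First I prove the Hermite identity $\mathbb{E}\{\mathrm{He}_k(\eta^{\ast}) \mid \hat\eta^{\ast}\} = \rho^{k}\mathrm{He}_k(\hat\eta^{\ast})$. Then I translate it into the statements about amplitude and non-centrality.

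For the identity, under joint Gaussianity with unit variances I write
\[
\eta^{\ast} = \rho\,\hat\eta^{\ast} + \sqrt{1-\rho^{2}}\,\xi,
\]
with $\xi \sim \mathcal{N}(0,1)$ independent of $\hat\eta^{\ast}$. The cleanest route is the generating function $\exp(tz - t^{2}/2) = \sum_k \mathrm{He}_k(z)\,t^{k}/k!$. Conditioning on $\hat\eta^{\ast}$ and using $\mathbb{E}\exp(t\sqrt{1-\rho^{2}}\,\xi) = \exp\{t^{2}(1-\rho^{2})/2\}$, I get
\[
\mathbb{E}\{\exp(t\eta^{\ast} - t^{2}/2) \mid \hat\eta^{\ast}\} = \exp\{(\rho t)\hat\eta^{\ast} - (\rho t)^{2}/2\}.
\]
Comparing coefficients of $t^{k}$ then gives the identity; the interchange of sum and expectation is justified by dominated convergence, since all exponential moments are finite. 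Equivalently, this is the Mehler (Ornstein--Uhlenbeck) semigroup acting diagonally on the Hermite basis.

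Next I connect this to the test. Suppose the departure added to the linear predictor is $\gamma\,\mathrm{He}_k(\eta^{\ast})$, suitably rescaled. Grouping is on $\hat\eta$, so the grouped residual means $C\,\mathbb{E}(\by - \bpi)$ depend on the departure only through its conditional expectation given $\hat\eta^{\ast}$, to first order in $\gamma$ (linearizing the logistic link). By the identity, that conditional expectation is $\gamma\rho^{k}\mathrm{He}_k(\hat\eta^{\ast})$. The visible signal is therefore the same shape in the fitted index with amplitude multiplied by $\rho^{k}$, which is the numerator factor noted after \eqref{eq:tau}.

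The non-centrality of the grouped statistic is a quadratic form in the mean vector of $\boldsymbol{r}$. By Proposition~\ref{prop:cancel}, after correction that mean is linear in the departure, and the reference covariance $I_G - UF^{-1}U^{\top}$ does not depend on $\gamma$. So the non-centrality scales by $(\rho^{k})^{2} = \rho^{2k}$.

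The main obstacle is making the first-order step honest. Three things need care. First, the linearization of $\pi$ in the departure must hold, since the departure must be small relative to the curvature of the link. Second, the projection $UF^{-1}U^{\top}$ may absorb the $k=1$ component, because a linear-in-index departure is partly refitted. Third, $\hat\eta$ itself depends on $\by$, so conditioning on it is not exactly conditioning on a fixed covariate. I would handle the third by treating $\hat\eta^{\ast}$ as its probability limit, which is a fixed linear function of $\boldsymbol{x}$ under the regularity conditions, so that the Gaussian-pair assumption is imposed on that limit. I would state the result as a first-order, local-alternative statement, noting that the $k=1$ case is also reduced by the fitted-parameter projection.
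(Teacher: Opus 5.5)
Your proof is correct and follows the paper's route. You use $\eta^{\ast}=\rho\hat\eta^{\ast}+\sqrt{1-\rho^{2}}\,\xi$ with the generating function to obtain $\mathbb{E}\{\mathrm{He}_k(\eta^{\ast})\mid\hat\eta^{\ast}\}=\rho^{k}\mathrm{He}_k(\hat\eta^{\ast})$; the grouped test sees the departure only through that conditional mean, so the visible signal keeps its shape with amplitude $\rho^{k}$, and the non-centrality, being quadratic in it, falls by $\rho^{2k}$ relative to grouping on the true index. One caution: do not dispose of your third obstacle by passing to the probability limit of $\hat\eta$ under the paper's fixed-dimension, $\lambda=o(n)$ conditions, because there $\bbh\to\bb_0$ and so $\rho\to1$, which makes the statement vacuous; the Gaussian pair with $\rho<1$ is a hypothesis of the proposition and should simply be taken as given.
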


The consequence is sharp and, once stated, unsurprising. Estimation error in the index
costs a linear departure a factor $\rho^{2}$, a quadratic departure $\rho^{4}$, and a
cubic departure $\rho^{6}$. At $\rho = 0.9$ --- a fitted index that most analysts would
regard as excellent --- a cubic departure retains only $53\%$ of its non-centrality; at
$\rho = 0.7$, only $12\%$. Higher-order departures are therefore intrinsically hard to
see through an estimated index, for \emph{any} test that groups on that index.

\subsection{Dilution and index-estimation noise}\label{sec:mechanisms}

It is tempting to collapse the high-dimensional difficulty into a single statement about
$p/n$. That would be wrong, and the distinction is practically useful because the two
mechanisms have different remedies.

(a) \emph{Dilution} is a property of the design, not of the dimension. A departure
attached to one coordinate, such as $\gamma(x_1^{2}-1)$, contributes to the index in
proportion to that coordinate's share of the index variance. In our design~B that share
is small, and the visible \emph{variance} of the departure is accordingly smaller by a
factor of $16$ than in design~A. Dilution would occur at the same magnitude in a
low-dimensional design in which one coordinate happened to carry a small weight.

(b) \emph{Index-estimation noise} is the dimensional mechanism, and it is
Proposition~\ref{prop:atten}: as $p/n$ grows, $\rho$ falls, and every departure is
attenuated by $\rho^{2k}$.

The two are separable by experiment, and the experiment separates them cleanly. Group
the residuals by the \emph{true} index instead of the fitted one, holding everything else
fixed, and calibrate each arm by direct Monte Carlo from the true model so that both are
exactly correct in size by construction. Whatever power then changes is attributable to
the grouping alone. In design~B this gives Table~\ref{tab:oracle}.

\begin{table}[!t]
\centering
\caption{Grouping on the fitted index against grouping on the true one, design~B, with
each arm calibrated to exact size by direct Monte Carlo so that the comparison is of
power only. Knowing the index transforms the index-aligned departure and does nothing at
all for the coordinate-wise one, which is the two mechanisms separated.}
\label{tab:oracle}
\begin{tabular}{llrrrr}
\toprule
& & \multicolumn{2}{c}{Fitted index} & \multicolumn{2}{c}{True index} \\
\cmidrule(lr){3-4}\cmidrule(lr){5-6}
Departure & $\gamma$ & decile & EDGE & decile & EDGE \\
\midrule
cubic (index-aligned)     & 1.5 & 0.092 & 0.046 & 0.623 & 0.633 \\
quadratic (coordinate-wise) & 1.0 & 0.059 & 0.061 & 0.067 & 0.059 \\
\bottomrule
\end{tabular}
\end{table}

The contrast is the whole argument of this section. For the index-aligned departure,
knowing the true index raises power from $0.092$ to $0.623$ on the decile basis and from
$0.046$ to $0.633$ on the EDGE basis, a factor of nearly fourteen. For the
coordinate-wise departure it changes nothing: $0.059$ to $0.067$, and on the EDGE basis
$0.061$ to $0.059$. So mechanism~(b) is real and large, but it is not what limits the
coordinate-wise case; there the departure is diluted before any index is estimated, and
an oracle who knew the index exactly would be no better off. Conversely an index-aligned
departure suffers no dilution at all, and its power curves collapse across designs once
expressed in $\nu\sqrt{n}$.

The honest scope statement is therefore not that the test lacks power in high
dimensions. It is that a grouped test on an estimated index detects index-aligned
departures at a rate governed by $\nu\sqrt{n}$ in any dimension, and detects
coordinate-wise departures at a rate that also depends on how much of the index that
coordinate carries.

\subsection{Comparison with residual-prediction tests}\label{sec:rivals}

Tests for high-dimensional generalized linear models built by other routes
\citep{guo2016,shah2018,jankova2020,javanmard2024} are usually described as alternatives
to a test like ours. We ran the comparison rather than assert it. In design~B at
$\lambda = 416$ with $G = 10$, fully paired so that every method sees the same dataset in
every replicate, the residual-prediction test of \citet{guo2016} beats ours decisively on
the coordinate-wise quadratic departure --- $0.710$ at $\gamma = 1.0$ and $1.000$ at
$\gamma = 1.5$, against our $0.055$ to $0.090$ --- and the two are level on the cubic.

That result is worth reporting for its own sake, but it is not the interesting one. The
two tests are not noisy versions of each other: \emph{they reject on different datasets}.
Across all six configurations the number of replicates in which both tests reject is
between $0$ and $15$ out of $200$, and the paired differences are $-0.655$ and $-0.925$
with $p < 2\times10^{-16}$. A residual-prediction method asks whether the model is right
as a function of $\boldsymbol{x}$; ours asks whether the deployed predictor is calibrated
along its own index. Those are different questions, and the data say so.

Our loss on the coordinate-wise departure is therefore confirmation of this paper's own
dilution mechanism, arriving from outside it: $x_1$ carries $3.4\%$ of the variance of
the index at $p = 100$, so a departure attached to $x_1$ is nearly invisible to any test
that groups on the index, and plainly visible to one that does not. An analyst who
suspects a specific covariate should use a residual-prediction test. One who wants to
know whether the risks the model reports can be read as risks should use this one.

Three points of software provenance, because they are checkable. The Shah--B\"uhlmann
package is \texttt{RPtests}, and it targets the Gaussian linear model; on a binary
response it is off-label, and we ran it only so that its size could speak. The
\citet{guo2016} implementation \texttt{GRPtests} was removed from the Comprehensive R
Archive Network and archived on 8 May 2022; we installed it from the archived tarball,
and we ran it both at its package default and at a single split, so that it is not
handicapped by a tuning choice. GRASP has no package and was not attempted.

\section{Simulation study}\label{sec:sim}

Two designs are used throughout. Design~A has $n = 500$, $p = 5$ and independent
standard normal covariates; design~B has $n = 400$, $p = 100$ and an autoregressive
covariate correlation of $0.7$, giving $p/n = 0.25$. In both, the null model is logistic
with the stated coefficients, so every rejection under the null is a false one. Full
specifications, seeds and code are in the supplementary material.

\subsection{Size}\label{sec:size}

\begin{table}[!t]
\centering
\caption{Rejection rate of the \emph{prepivoted corrected} test under a correctly
specified model, at nominal level $0.05$, with $149$ bootstrap replicates. Design~A uses
$2000$ Monte Carlo replications and design~B uses $1000$, except the $\lambda = 50$ row,
which is pooled over three independent replication blocks totalling $4000$ and is marked
$\dagger$. Penalties are on the theory scale $\lambda$; the corresponding
$\lambda_{\mathrm{g}}$ is $\lambda/n$. The uncorrected test on these same six cells
rejects between $0.92$ and $1.00$. With $B = 149$ the bootstrap $p$-value takes the
values $k/150$, so $0.05$ is not itself attainable; the adjacent attainable levels are
$7/150 = 0.0467$ and $8/150 = 0.0533$.}
\label{tab:size}
\begin{tabular}{llrrrr}
\toprule
& & \multicolumn{2}{c}{\texttt{SC.HL}} & \multicolumn{2}{c}{\texttt{SC.EDGE}} \\
\cmidrule(lr){3-4}\cmidrule(lr){5-6}
Design & $\lambda$ & rate & s.e. & rate & s.e. \\
\midrule
A & 100  & 0.0435 & 0.0046 & 0.0500 & 0.0049 \\
A & 137  & 0.0475 & 0.0048 & 0.0505 & 0.0049 \\
A & 200  & 0.0465 & 0.0047 & 0.0505 & 0.0049 \\
\midrule
B & 50$^{\dagger}$ & 0.0595 & 0.0037 & 0.0370 & 0.0030 \\
B & 416  & 0.0710 & 0.0081 & 0.0540 & 0.0071 \\
B & 1000 & 0.0520 & 0.0070 & 0.0610 & 0.0076 \\
\bottomrule
\end{tabular}
\end{table}

Table~\ref{tab:size} is the central empirical claim of the paper. Against an uncorrected
test that rejects correct models between $92\%$ and $100\%$ of the time on these same
six configurations, the prepivoted corrected test holds its nominal level. In design~A
it does so at every penalty on both bases, with standard errors of $0.005$.

One cell requires comment rather than concealment. On the EDGE basis in design~B at the
lightest penalty, $\lambda = 50$, the test is detectably conservative: $0.0370$ with an
exact $95\%$ interval of $[0.0314, 0.0433]$, which is $4.4$ standard errors below
nominal. We report this cell at four times the replication of the others because our
first estimate of it, from a single block of $1000$, was $0.0270$, which would have
placed it outside the range $[0.03, 0.08]$ we fixed in advance as acceptable. The pooled
estimate lies inside that range; it is the superseded single block that did not. Two further independent blocks did not
reproduce that value, the three are homogeneous ($p = 0.09$), and the pooled estimate is
the one given. The episode is a caution about reading a single block at a probability
this small, and we apply the same standard to any other cell that falls outside the
range.

The conservatism that survives is genuine but modest, and it is not uniform --- the same
basis returns $0.0540$ at $\lambda = 416$ and $0.0610$ at $\lambda = 1000$ --- so it is
confined to the light-penalty end of the high-dimensional design and should not be
described as a general property of the basis. It does mean that where the EDGE basis is
compared with the decile basis in design~B, a size-adjusted comparison is the fair one,
and we report both.

\subsection{Power}\label{sec:power}

\begin{table}[!t]
\centering
\caption{Power of the prepivoted corrected test on the decile and EDGE bases, paired on
identical fits and refits with the classical test on the maximum likelihood fit. Two
departures are used: an omitted quadratic $\gamma(x_1^{2}-1)$ and an omitted interaction
$\gamma x_1 x_2$. Based on $2000$ replications (Monte Carlo standard error at most
$0.0112$). The $\gamma = 0$ row is size, and is the same simulation for both departures.
No valid maximum likelihood comparator is available in design~B. Size-adjusted power is given in the
supplementary material; it differs from the entries below by at most $0.02$ and reverses
no comparison. Design~A is at $\lambda = 100$ and design~B at $\lambda = 416$. The
$\gamma = 0$ rows come from a different run than the corrected-size table
(\texttt{T25\_power\_highB.R}, stream $20260810$, against \texttt{T23\_rerun.R}, stream
$20260807$); the two agree to less than one Monte Carlo standard error.}
\label{tab:power}
\begin{tabular}{lllrrrr}
\toprule
& & & \multicolumn{2}{c}{Corrected} & \multicolumn{2}{c}{Classical, on the MLE} \\
\cmidrule(lr){4-5}\cmidrule(lr){6-7}
Design & Departure & $\gamma$ & \texttt{SC.HL} & \texttt{SC.EDGE} & decile & EDGE \\
\midrule
A & ---          & 0.00 & 0.0445 & 0.0460 & 0.0440 & 0.0540 \\
\addlinespace
A & quadratic    & 0.25 & 0.0760 & 0.1150 & 0.0890 & 0.1205 \\
A & quadratic    & 0.50 & 0.1475 & 0.2615 & 0.1715 & 0.2675 \\
A & quadratic    & 0.75 & 0.2015 & 0.3755 & 0.2350 & 0.3835 \\
A & quadratic    & 1.00 & 0.2700 & 0.4150 & 0.2960 & 0.4080 \\
\addlinespace
A & interaction  & 0.25 & 0.0595 & 0.0750 & 0.0630 & 0.0870 \\
A & interaction  & 0.50 & 0.1055 & 0.1885 & 0.1290 & 0.2065 \\
A & interaction  & 0.75 & 0.1940 & 0.3555 & 0.2275 & 0.3780 \\
A & interaction  & 1.00 & 0.2835 & 0.4615 & 0.3265 & 0.4885 \\
\addlinespace
B & ---          & 0.00 & 0.0640 & 0.0495 & --- & --- \\
B & quadratic    & 0.50 & 0.0595 & 0.0485 & --- & --- \\
B & quadratic    & 1.00 & 0.0635 & 0.0600 & --- & --- \\
B & quadratic    & 1.50 & 0.0740 & 0.0630 & --- & --- \\
\bottomrule
\end{tabular}
\end{table}

The comparison that matters is not corrected-versus-uncorrected under the null, where
the uncorrected test is disqualified, but corrected-versus-classical where the classical
test is legitimate. We therefore compare the corrected test on a penalized fit with the
classical test on a maximum likelihood fit in design~A, where the latter exists. Paired
over the eight design~A alternatives, the correction costs what
Figure~\ref{fig:costgain} and Table~\ref{tab:cost} report, and it is \emph{not} free. On
the decile basis it costs about two and a half
percentage points, or thirteen per cent of the classical test's power, and at $2000$
replications that is unambiguous. We record this plainly because an earlier version of
this experiment, at $200$ replications, returned $-0.010$ with $p = 0.10$ and would have
supported the more comfortable claim that the cost is indistinguishable from zero. It was
simply too small to see the effect.

\begin{table}[!t]
\centering
\caption{What the correction costs in power. Each entry is the mean difference between
the corrected test on a penalized fit and the classical test on a maximum likelihood fit,
paired over the eight design~A alternatives of Table~\ref{tab:power} and computed from
$2000$ replications per cell. The cost is real on both bases and larger on the decile
basis.}
\label{tab:cost}
\begin{tabular}{lrrr}
\toprule
Test & mean difference & $t$ ($7$ d.f.) & relative \\
\midrule
\texttt{SC.HL}   & $-0.0250$ \ (s.e.\ $0.0044$) & $-5.70$, $p = 0.0007$ & $-13.0\%$ \\
\texttt{SC.EDGE} & $-0.0115$ \ (s.e.\ $0.0038$) & $-3.00$, $p = 0.020$ & $-3.9\%$ \\
\bottomrule
\end{tabular}
\end{table}

Two things put the cost in proportion. It is a cost paid only where there is a choice. In design~B no valid maximum likelihood
comparator is available, because the unpenalized fit is itself invalid there; and on the
glaucoma data the maximum likelihood estimate does not exist at all. In both the
classical test is unavailable at any price, and the comparison has no second column. And it is much smaller than the gain available from the
basis. Switching from the decile to the EDGE basis under the same correction raises mean
power across these eight alternatives from $0.167$ to $0.281$, an increase of $0.114$ or $68\%$ --- about four and
a half times what the correction costs. An analyst who wants the power back does
not need a different test; they need a better basis.

\begin{figure}[!t]
\centering
\includegraphics[width=\textwidth]{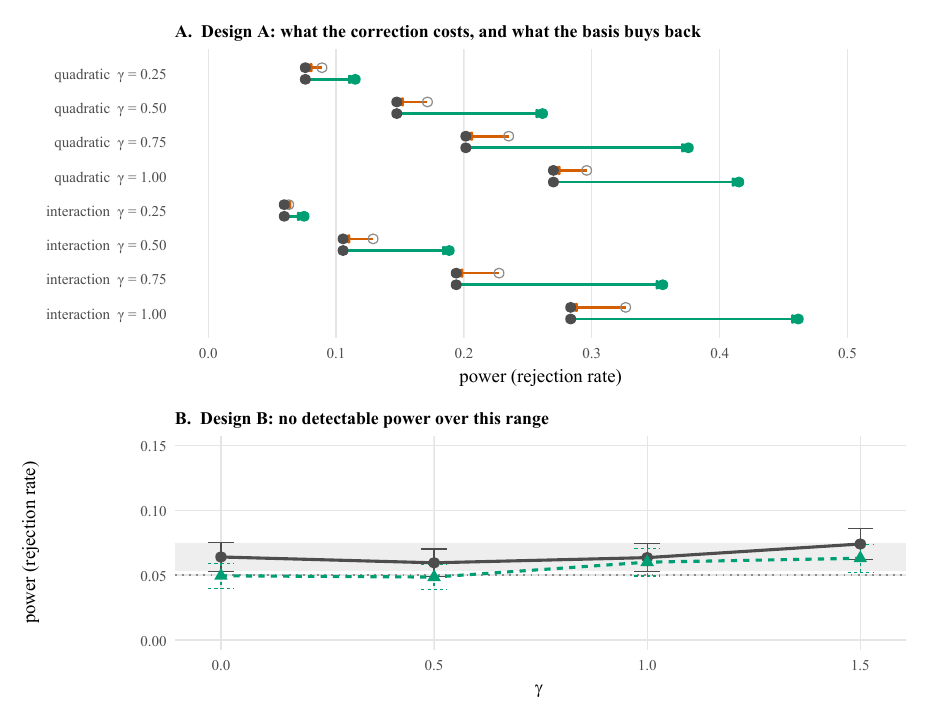}
\caption{Power of the classical test on a maximum likelihood fit, of \texttt{SC.HL}, and
of \texttt{SC.EDGE}. In panel~A each alternative
occupies two half-rows: above, an orange arrow from the classical test (open circle) to
\texttt{SC.HL} (filled grey circle); below, a green arrow from that point to
\texttt{SC.EDGE} (filled green circle). The two arrows are drawn with identical weight.
Panel~B is design~B on a tight vertical scale, with two-standard-error bars and a grey
band at the $\gamma = 0$ rejection rate plus or minus two standard errors. Based on
$2000$ replications per cell.}
\label{fig:costgain}
\figalttext[Short orange cost arrows against much longer green gain arrows]{Panel A is a
horizontal dot plot of eight alternatives. For each, a short orange arrow points left
from the classical test to SC.HL, and a much longer green arrow points right
from SC.HL to SC.EDGE. Panel B plots design B power
against gamma on a scale from zero to 0.15; all points lie within a shaded band marking
the null rejection rate.}
\end{figure}

Design~B tells a blunter story. Over the range $\gamma \leq 1.5$ the corrected test has no
detectable power at all: the decile rejection rate moves from $0.0640$ to $0.0740$
between $\gamma = 0$ and $\gamma = 1.5$, a change of $0.010$ against a standard error of
$0.0075$. This too corrects the smaller study, which showed $0.040$ rising to $0.095$ and
looked like the beginning of a power curve. It was noise. The departure has to be made several times larger before design~B responds at all, for
the dilution reason given in Section~\ref{sec:mechanisms}.

Figure~\ref{fig:power} shows what governs power once dimension enters. It is built on a
different calibration from Tables~\ref{tab:size}--\ref{tab:cost}, and we say so because
the difference matters for reading it. Its arms are calibrated by direct Monte Carlo from the true model rather than by
prepivoting --- $2000$ null draws per cell, with the critical value taken at the $95$th
percentile --- so every curve starts from exactly $0.05$ by construction. That is the right design for the question the figure asks --- which
departures a grouped test can see, and on what scale --- because it removes the
calibration of the test as a competing explanation for any difference between curves. It
does mean the figure is not a display of the prepivoted procedure itself; for that, see
Tables~\ref{tab:size} and~\ref{tab:power}.

\begin{figure}[!t]
\centering
\includegraphics[width=\textwidth]{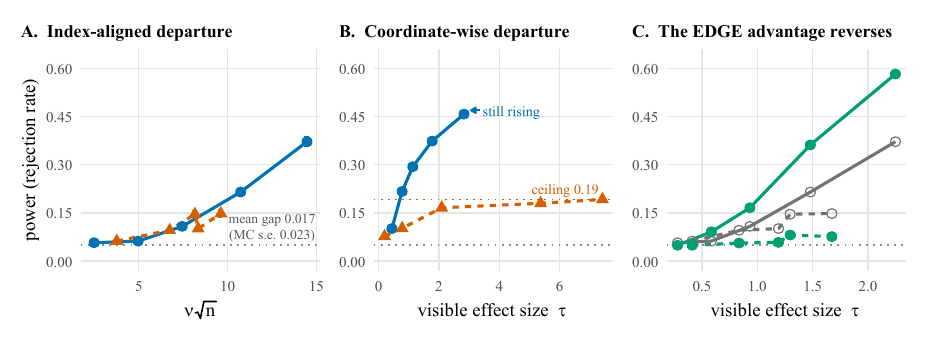}
\caption{Power of the corrected test against $\nu\sqrt{n}$ for an index-aligned departure (A) and
against the visible effect size $\tau$ for a coordinate-wise departure (B) and for the
two bases (C). In A and B the solid blue curve with circles is design~A and the dashed
orange curve with triangles design~B; in C the green curve with filled circles is the
EDGE basis and the grey curve with open circles the decile basis, solid for design~A and
dashed for design~B. Rejection rates use a critical value obtained by direct Monte Carlo
from the true model rather than by prepivoting, so every cell is exactly $5\%$ under the
null by construction; $B_1 = 1000$ replications per cell, Monte Carlo standard error at
most $0.016$, shown as error bars in panel~A. In panel~A the third and fourth design~B
points sit at almost the same value of $\nu\sqrt{n}$ ($8.16$ and $8.34$) because $\nu$ is
not monotone in $\gamma$.}
\label{fig:power}
\figalttext[Three panels showing that power collapses onto one curve for index-aligned
departures but not for coordinate-wise ones]{Three line charts. Panel A plots power
against nu times the square root of n for an index-aligned cubic departure; the design~A
and design~B curves lie on top of one another, with error bars overlapping throughout.
Panel B plots power against the visible effect size tau for a coordinate-wise quadratic
departure; the design~A curve continues to rise while the design~B curve flattens at
about 0.19. Panel C plots power against tau for both the EDGE and decile bases; EDGE lies
above decile over most of design~A and below it throughout design~B.}
\end{figure}

\section{Reanalysis of the glaucoma model}\label{sec:app}

We now return to the model of Section~\ref{sec:intro} and apply the corrected test.

\subsection{The source of the uncorrected rejection}\label{sec:whatitsaw}

Figure~\ref{fig:calib} shows the calibration of the ridge fit at the cross-validated
penalty. The pattern is not random. In the lowest deciles of predicted risk the observed
event rate lies \emph{below} the diagonal --- risk is predicted too high --- and in the
highest deciles it lies \emph{above} it, where risk is predicted too low. The fitted
probabilities are therefore pulled systematically towards the centre, in nine of the ten
deciles and in the direction the penalty dictates; the sixth sits on the diagonal, at the
crossing point. The apparent calibration slope is $2.43$,
meaning the linear predictor would have to be multiplied by roughly two and a half
before the fitted probabilities agreed with the observed rates.

That compression is not a defect of the model. It is what a penalty does, and it is the
price paid for obtaining any estimate at all in a design where the maximum likelihood
estimate does not exist. It is the displacement that Proposition~\ref{prop:law}
describes. The uncorrected test sees it, has no way to distinguish it from
misspecification, and reports the model as wrong. We say it is the displacement
Proposition~\ref{prop:law} describes rather than claim that the proposition applies here:
its conditions hold the dimension fixed, and these data have $p/n = 0.32$. What the
proposition supplies on this dataset is the mechanism and its direction; what supplies
the $p$-values is the prepivoted procedure, whose validity in this regime rests on the
simulation evidence of Section~\ref{sec:sim}.

\begin{figure}[!t]
\centering
\includegraphics[width=0.72\textwidth]{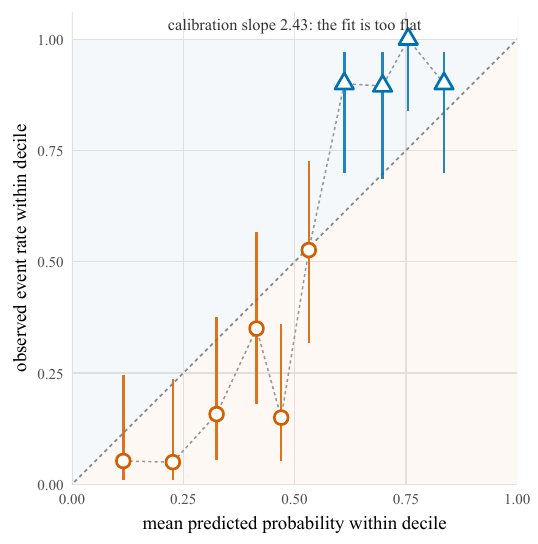}
\caption{Observed event rate against mean predicted probability within deciles of the ridge fit on
the glaucoma data, at the cross-validated penalty. Deciles below the diagonal, drawn as
orange circles, have their risk predicted too high; those above it, drawn as blue
triangles, have it predicted too low. Vertical bars are Wilson intervals and the dashed
line is equality.}
\label{fig:calib}
\figalttext[Calibration plot showing predicted risks compressed towards the centre]{
Calibration plot of observed event rate against mean predicted probability for ten
deciles of the ridge fit, with Wilson intervals and a diagonal reference line. The six
lowest deciles lie below the diagonal, meaning risk is predicted too high, and the four
highest lie above it, meaning risk is predicted too low; the sixth lies almost exactly on
it, observed 0.526 against predicted 0.532. The apparent calibration slope is 2.43.}
\end{figure}

\subsection{The corrected analysis}\label{sec:corrected}

\begin{table}[!t]
\centering
\caption{Uncorrected and corrected $p$-values for the glaucoma model, over four
penalties and three group counts. The uncorrected test rejects in all twelve
configurations; the corrected test does not. Penalties are on the theory scale, with
$\hat\lambda_{1\mathrm{se}} = 295.3$ and $\hat\lambda_{\min} = 19.9$ selected by
ten-fold cross-validation. The uncorrected column refers the grouped statistic to the
maximum likelihood covariance $I_G - UF^{-1}U^{\top}$ by Monte Carlo, which is the
strongest form of the incumbent; entries below $10^{-5}$ are the resolution limit of
$2\times10^{5}$ draws. Corrected $p$-values use $499$ bootstrap replicates.}
\label{tab:glaucoma}
\small
\setlength{\tabcolsep}{4pt}
\begin{tabular}{lrrrrrrrrr}
\toprule
& \multicolumn{3}{c}{Uncorrected}
& \multicolumn{3}{c}{\texttt{SC.HL}}
& \multicolumn{3}{c}{\texttt{SC.EDGE}} \\
\cmidrule(lr){2-4}\cmidrule(lr){5-7}\cmidrule(lr){8-10}
$\lambda$ & $G=5$ & $G=10$ & $G=20$ & $G=5$ & $G=10$ & $G=20$
          & $G=5$ & $G=10$ & $G=20$ \\
\midrule
$\hat\lambda_{\min}$ & 0.0003 & 0.0005 & 0.0053
  & 0.014 & 0.048 & 0.140 & 0.456 & 0.096 & 0.114 \\
$\tfrac12\hat\lambda_{1\mathrm{se}}$ & $<10^{-5}$ & $<10^{-5}$ & 0.0001
  & 0.032 & 0.070 & 0.062 & 0.094 & 0.032 & 0.106 \\
$\hat\lambda_{1\mathrm{se}}$ & $<10^{-5}$ & $<10^{-5}$ & $<10^{-5}$
  & 0.036 & 0.026 & 0.172 & 0.218 & 0.034 & 0.078 \\
$2\hat\lambda_{1\mathrm{se}}$ & $<10^{-5}$ & $<10^{-5}$ & $<10^{-5}$
  & 0.294 & 0.276 & 0.678 & 0.500 & 0.038 & 0.096 \\
\bottomrule
\end{tabular}
\end{table}

Table~\ref{tab:glaucoma} reports both tests across four penalties and three choices of
the number of groups. Three things should be said about it, in order of how much they
matter.

(a) The uncorrected test is uninformative here. It rejects in every one of the twelve
configurations, with $p$ below $0.006$ throughout and below $10^{-5}$ at and beyond the
cross-validated penalty. Given the size distortion documented in Section~\ref{sec:sim}
--- in simulation designs of comparable $p/n$ at comparable shrinkage, rejection of
correctly specified models between $92\%$ and $100\%$ of the time
(Table~\ref{tab:size}) --- a small $p$-value from this test carries almost no information
about this model.

We use the sharpest available form of the uncorrected test, referring the grouped
statistic to the maximum likelihood covariance rather than to $\chi^2_{G-2}$. The
distinction matters more than one would expect, and it illustrates the paper's own point
from a different angle. The classical degrees-of-freedom count would be $G-2 = 8$ at ten
groups, so it allows two decile directions to be absorbed by estimation. The trace of
$I_G - UF^{-1}U^{\top}$ on these data is $4.84$ at the cross-validated penalty, and lies
between $4.84$ and $5.08$ across the four penalties, so more than five of the ten
directions are absorbed. Fitting $62$ penalized coefficients costs far more of the
residual space than the classical count admits. The textbook $\chi^2_{G-2}$ reference is
therefore too diffuse, and gives
$p$-values from $7.5\times10^{-7}$ to $0.115$ across the same grid, failing to reject in
one cell of the twelve. Comparing our correction against that weaker version would
flatter it, so we do not.

(b) The corrected test returns $p \approx 0.03$ at the cross-validated penalty with ten
groups, and values between $0.014$ and $0.678$ across the grid. The evidence for misfit
is therefore real but modest, and it is weaker than the uncorrected analysis suggested
by more than three orders of magnitude. We deliberately do not claim that the corrected
test exonerates the model. It does not. What it does is replace an uninterpretable
number with an interpretable one.

(c) The corrected $p$-value varies with the number of groups, from $0.026$ at ten groups
to $0.172$ at twenty. Sensitivity to grouping is inherited from the Hosmer--Lemeshow
family and our correction does not remove it; \citet{nattino2020} address it directly by
replacing the grouping with a fixed partition of the probability scale, which is
complementary to the correction developed here. We report the whole grid rather than a
single cell, and we would encourage the same practice generally.

\subsection{Reanalysis of a second dataset}\label{sec:mvf}

A reader is entitled to ask whether one dataset carries this much weight. We therefore
ran the identical grid on \texttt{GlaucomaMVF} (package \texttt{ipred}), a second series
of $n = 170$ eyes described by $p = 63$ of the same morphometric variables, measured with
the same instrument, so that $p/n = 0.371$. It is not an independent replication --- it
is the same clinical question recorded with the same protocol --- but it is a different
sample, and the maximum likelihood estimate does not exist there either.

The outcome is not the same, and that is the point. On \texttt{GlaucomaM} the correction
turns an overwhelming verdict into a borderline one and does not exonerate the model. On
\texttt{GlaucomaMVF} it produces a clean reversal: the uncorrected test rejects in all
twelve configurations on the EDGE basis, and the corrected test rejects in \emph{none} of
them on either basis, returning $p$ between $0.224$ and $0.998$ on the decile basis and
between $0.298$ and $0.874$ on the EDGE basis. Repeating the cross-validated cell under
five seeds leaves those conclusions unchanged. The full grid is given in the
supplementary material.

Two different verdicts from one mechanism is what we would expect and what we want. The
correction does not manufacture acceptance; it removes a displacement whose size depends
on the penalty and the design, and what is left over is whatever misfit the data actually
contain. On one dataset that residue is real but modest; on the other there is none to
find. An analyst who applied the uncorrected test to both would have concluded that both
models were badly misspecified, and would have been wrong about at least one of them.

\subsection{Implications for clinical practice}\label{sec:clinicalchange}

The two analyses support different actions.

Under the uncorrected analysis the model is rejected outright. The natural responses are
to abandon it, to rebuild it with different predictors, or to collect more data --- and
in a screening application where the alternative is subjective expert assessment of the
same optic disc images \citep{coan2023}, discarding a model with an area under the curve
of $0.905$ has a cost measured in missed early disease.

Under the corrected analysis the model is not rejected outright, and the residual
evidence of misfit has a recognizable shape: the predicted probabilities are too flat,
not mis-ordered. Discrimination is unaffected by that, so the model remains usable for
ranking patients by risk --- which is what a screening triage actually requires. Where
the model should not be used unaltered is where the predicted probability is read as a
number, for instance when it is compared against a fixed threshold to decide referral.
The remedy there is recalibration rather than reconstruction, and the direction of the
required correction is known from the fit itself.

This distinction has a reassuring consequence that is worth stating explicitly, because
it is not obvious. Penalization compresses risks towards the average, which in the
terminology of \citet{vancalster2015} is underfitting rather than overfitting. Of the
four kinds of miscalibration they examine, underfitting is the only one that never made
a model clinically harmful in the sense of being worse than a default strategy. A
penalized model that has been wrongly condemned by an uncorrected goodness-of-fit test
is therefore likely to be one whose miscalibration was, in decision-analytic terms, the
benign kind. The cost of the invalid test is not that it protects patients too
zealously; it is that it discards usable models for the wrong reason.

\subsection{Limitations of the application}\label{sec:applimit}

These are development data, and every quantity we report is apparent rather than
validated. Our purpose is a like-for-like comparison of two tests on the same fit, which
does not require external validation; but the calibration slope of $2.43$ should not be
read as an estimate of what the model would do in a new population, and a model intended
for clinical use would require external validation and, on this evidence, recalibration
before deployment \citep{vancalster2020}.

\section{Software}\label{sec:software}

The EDGE basis of \citet{ebrahim2026edge} is implemented as \texttt{edge.gof()} in the R
package \texttt{ebrahim.gof} (version 2.4.0) on the Comprehensive R Archive Network, and
the EDGE statistic computed by the code used here agrees with that implementation exactly
on three datasets. The shrinkage correction itself is not yet in a released version of
that package. A reference implementation is included in the replication archive as
\texttt{shrink.gof()}, which returns \texttt{SC.HL} and \texttt{SC.EDGE} for a fitted
ridge model; it depends only on base \textsf{R}, so that reproducing the results here
does not depend on any package version. On the glaucoma data of
Section~\ref{sec:app} it returns the entries of Table~\ref{tab:glaucoma} to within
bootstrap error. The scale identity $\lambda = n\lambda_{\mathrm{g}}$ against
\citet{friedman2010} was checked numerically to $1.6\times10^{-9}$.

\section{Discussion}\label{sec:disc}

The general lesson is the separation of Section~\ref{sec:twonulls}. Whenever a fitting
procedure deliberately biases the estimator --- ridge, lasso, elastic net, Firth's
correction, any Bayesian prior with a non-negligible influence --- the structural null
and the calibration null stop being the same hypothesis, and a diagnostic built for the
first will read the deliberate bias as evidence against the model. Grouped goodness-of-fit tests are the
case treated here because they are the ones clinical prediction papers actually use, but
the mechanism is not specific to them: any diagnostic whose reference distribution is
centred by an unbiasedness argument inherits the same problem.

Three things are deferred and we name them as open rather than solved.

(a) Our theory is first-order, treats $\lambda$ as non-random and requires
$\lambda = o(n)$, and holds the dimension fixed. The empirical behaviour we rely on most
--- validity at $p/n = 0.25$, and the glaucoma analysis itself at $p/n = 0.32$ --- comes
from prepivoting, for which we have no analytic account in the proportional regime where
$p/n \to \kappa \in (0,1)$. The results of \citet{sur2019} and \citet{candes2020} suggest
that such an account exists and would be worth having.

(b) We treat the ridge penalty, and Remark~\ref{rem:smooth} extends the argument to any
smooth penalty. The lasso is genuinely outside it, but the reason is sharper than
non-differentiability. For the lasso the stationarity condition reads
$X^{\top}(\by - \hat\bpi) = \lambda\hat{\boldsymbol{s}}$ for a subgradient
$\hat{\boldsymbol{s}}$, so the same substitution returns the debiased lasso rather than
nothing at all. The obstruction is that $\hat{\boldsymbol{s}}$ is not a deterministic
function of the fit on the active set, and $F^{-1}$ must be replaced by a nodewise
approximation; the displacement $\boldsymbol{\mu}$ therefore becomes random, and the
exact cancellation of Proposition~\ref{prop:cancel} no longer holds identically. Whether
it holds to sufficient order is open, and we do not claim it.

(c) We fix $\lambda$ throughout. In practice $\lambda$ is chosen by cross-validation on
the same data, and the resulting selection effect is not accounted for by our procedure
or, so far as we know, by any other.

We are also explicit about what the test is. It is a test of calibration along the fitted
index. It is not an omnibus test of the logistic structure, and Section~\ref{sec:scope}
quantifies exactly how much of a departure it can be expected to see. We regard that
quantification as part of the contribution: a diagnostic whose scope is known is more
useful than one whose scope is assumed.

The last word belongs to the eyes. A model built to detect glaucoma early, from
measurements a machine already takes, was reported by the standard check as badly
misspecified --- at every penalty and at every grouping we tried. On that reading
it would have been rebuilt or abandoned, and in a disease whose damage cannot be
reversed, the cost of abandoning a usable screening model is counted in sight that was
not saved. The corrected analysis says something different and more useful: the fitted
probabilities are too flat, not mis-ordered; the ranking on which a screening triage
depends is sound; and what the model needs before anyone reads its probabilities as
numbers is recalibration, not reconstruction. That is a smaller and more actionable
conclusion than the one the uncorrected test appeared to license, and it is the one the
data actually support. We think it is also the more common situation. Wherever a
prediction model has been penalized because it had to be, and then judged by a test that
assumed it had not been, the verdict on record may be a verdict about the estimator
rather than about the model.

\section*{Data availability}
The \texttt{GlaucomaM} data are distributed in the R package \texttt{TH.data} on the
Comprehensive R Archive Network. The predictors are standardized to zero mean and unit
variance before fitting; the data are otherwise unmodified. The second dataset,
\texttt{GlaucomaMVF}, is distributed in the R package \texttt{ipred}. All code, seeds and
intermediate results required to reproduce every number and figure in this paper, together
with the per-replication $p$-values described in the supplementary material, are available
at \url{https://github.com/ebrahimkhaled/gof-penalized-paper} and archived at
\url{https://doi.org/10.5281/zenodo.21903202}.

\section*{Acknowledgements}
All praise and thanks are due to Allah, the Almighty, for every blessing that made this
work possible.

I thank the Department of Applied Statistics, Alexandria University.
I am grateful to Torsten Hothorn and the maintainers of the \texttt{TH.data} and
\texttt{ipred} packages for making the glaucoma data publicly available, and to the
maintainers of \texttt{glmnet} and \texttt{GRPtests}.

In line with the journal's policy on artificial intelligence tools: an AI assistant
(Claude, Anthropic) was used to draft and check the simulation code, prepare figures,
and edit the prose. All derivations, numerical results and conclusions were verified by
the author, who takes full responsibility for the content.

\section*{Funding}
No funding was received for this work.

\section*{Conflict of interest}
None declared.



\begin{thebibliography}{99}

\bibitem[Beran(1987)]{beran1987}
Beran, R. (1987).
Prepivoting to reduce level error of confidence sets.
\emph{Biometrika}, \textbf{74}(3), 457--468.

\bibitem[Beran(1988)]{beran1988}
Beran, R. (1988).
Prepivoting test statistics: a bootstrap view of asymptotic refinements.
\emph{Journal of the American Statistical Association}, \textbf{83}(403), 687--697.

\bibitem[Cand\`es and Sur(2020)]{candes2020}
Cand\`es, E.J. and Sur, P. (2020).
The phase transition for the existence of the maximum likelihood estimate in
high-dimensional logistic regression.
\emph{The Annals of Statistics}, \textbf{48}(1), 27--42.

\bibitem[Chernoff and Lehmann(1954)]{chernoff1954}
Chernoff, H. and Lehmann, E.L. (1954).
The use of maximum likelihood estimates in $\chi^2$ tests for goodness of fit.
\emph{The Annals of Mathematical Statistics}, \textbf{25}(3), 579--586.

\bibitem[Coan et~al.(2023)]{coan2023}
Coan, L.J., Williams, B.M., Krishna Adithya, V. et~al. (2023).
Automatic detection of glaucoma via fundus imaging and artificial intelligence: a review.
\emph{Survey of Ophthalmology}, \textbf{68}(1), 17--41.

\bibitem[Collins et~al.(2015)]{collins2015}
Collins, G.S., Reitsma, J.B., Altman, D.G. and Moons, K.G.M. (2015).
Transparent reporting of a multivariable prediction model for individual prognosis or
diagnosis (TRIPOD): the TRIPOD statement.
\emph{BMJ}, \textbf{350}, g7594.

\bibitem[Copas(1989)]{copas1989}
Copas, J.B. (1989).
Unweighted sum of squares test for proportions.
\emph{Journal of the Royal Statistical Society, Series C}, \textbf{38}(1), 71--80.

\bibitem[Dawid(1982)]{dawid1982}
Dawid, A.P. (1982).
The well-calibrated Bayesian.
\emph{Journal of the American Statistical Association}, \textbf{77}(379), 605--613.

\bibitem[Ebrahim and El-Kotory(2026)]{ebrahim2026edge}
Ebrahim, E.K. and El-Kotory, A. (2026).
EDGE: a closed-form directed test for the calibration of probabilistic binary classifiers.
Submitted to \emph{Advances in Data Analysis and Classification}. Preprint,
\url{https://doi.org/10.5281/zenodo.21902134}.

\bibitem[Firth(1993)]{firth1993}
Firth, D. (1993).
Bias reduction of maximum likelihood estimates.
\emph{Biometrika}, \textbf{80}(1), 27--38.

\bibitem[Friedman et~al.(2010)]{friedman2010}
Friedman, J., Hastie, T. and Tibshirani, R. (2010).
Regularization paths for generalized linear models via coordinate descent.
\emph{Journal of Statistical Software}, \textbf{33}(1), 1--22.

\bibitem[Gulati et~al.(2022)]{gulati2022}
Gulati, G., Upshaw, J., Wessler, B.S. et~al. (2022).
Generalizability of cardiovascular disease clinical prediction models: 158 independent
external validations of 104 unique models.
\emph{Circulation: Cardiovascular Quality and Outcomes}, \textbf{15}(4), e008487.

\bibitem[Guo and Chen(2016)]{guo2016}
Guo, B. and Chen, S.X. (2016).
Tests for high dimensional generalized linear models.
\emph{Journal of the Royal Statistical Society, Series B}, \textbf{78}(5), 1079--1102.

\bibitem[Hosmer and Lemeshow(1980)]{hosmer1980}
Hosmer, D.W. and Lemeshow, S. (1980).
Goodness of fit tests for the multiple logistic regression model.
\emph{Communications in Statistics -- Theory and Methods}, \textbf{9}(10), 1043--1069.

\bibitem[Hosmer et~al.(1997)]{hosmer1997}
Hosmer, D.W., Hosmer, T., le~Cessie, S. and Lemeshow, S. (1997).
A comparison of goodness-of-fit tests for the logistic regression model.
\emph{Statistics in Medicine}, \textbf{16}(9), 965--980.

\bibitem[Jankov\'a et~al.(2020)]{jankova2020}
Jankov\'a, J., Shah, R.D., B\"uhlmann, P. and Samworth, R.J. (2020).
Goodness-of-fit testing in high dimensional generalized linear models.
\emph{Journal of the Royal Statistical Society, Series B}, \textbf{82}(3), 773--795.

\bibitem[Javanmard and Mehrabi(2024)]{javanmard2024}
Javanmard, A. and Mehrabi, M. (2024).
GRASP: a goodness-of-fit test for classification learning.
\emph{Journal of the Royal Statistical Society, Series B}, \textbf{86}(1), 215--245.

\bibitem[le~Cessie and van~Houwelingen(1991)]{lecessie1991}
le~Cessie, S. and van~Houwelingen, J.C. (1991).
A goodness-of-fit test for binary regression models, based on smoothing methods.
\emph{Biometrics}, \textbf{47}(4), 1267--1282.

\bibitem[le~Cessie and van~Houwelingen(1992)]{lecessie1992}
le~Cessie, S. and van~Houwelingen, J.C. (1992).
Ridge estimators in logistic regression.
\emph{Journal of the Royal Statistical Society, Series C}, \textbf{41}(1), 191--201.

\bibitem[Moore and Spruill(1975)]{moore1975}
Moore, D.S. and Spruill, M.C. (1975).
Unified large-sample theory of general chi-squared statistics for tests of fit.
\emph{The Annals of Statistics}, \textbf{3}(3), 599--616.

\bibitem[Nattino et~al.(2020)]{nattino2020}
Nattino, G., Pennell, M.L. and Lemeshow, S. (2020).
Assessing the goodness of fit of logistic regression models in large samples: a
modification of the Hosmer--Lemeshow test.
\emph{Biometrics}, \textbf{76}(2), 549--560.

\bibitem[Pavlou et~al.(2024)]{pavlou2024}
Pavlou, M., Omar, R.Z. and Ambler, G. (2024).
Penalized regression methods with modified cross-validation and bootstrap tuning produce
better prediction models.
\emph{Biometrical Journal}, \textbf{66}(5), e2300245.

\bibitem[Puhr et~al.(2017)]{puhr2017}
Puhr, R., Heinze, G., Nold, M., Lusa, L. and Geroldinger, A. (2017).
Firth's logistic regression with rare events: accurate effect estimates and
predictions?
\emph{Statistics in Medicine}, \textbf{36}(14), 2302--2317.

\bibitem[Riley et~al.(2021)]{riley2021}
Riley, R.D., Snell, K.I.E., Martin, G.P. et~al. (2021).
Penalization and shrinkage methods produced unreliable clinical prediction models
especially when sample size was small.
\emph{Journal of Clinical Epidemiology}, \textbf{132}, 88--96.

\bibitem[Shah and B\"uhlmann(2018)]{shah2018}
Shah, R.D. and B\"uhlmann, P. (2018).
Goodness-of-fit tests for high dimensional linear models.
\emph{Journal of the Royal Statistical Society, Series B}, \textbf{80}(1), 113--135.

\bibitem[Sur and Cand\`es(2019)]{sur2019}
Sur, P. and Cand\`es, E.J. (2019).
A modern maximum-likelihood theory for high-dimensional logistic regression.
\emph{Proceedings of the National Academy of Sciences}, \textbf{116}(29), 14516--14525.

\bibitem[Sur et~al.(2019)]{surchen2019}
Sur, P., Chen, Y. and Cand\`es, E.J. (2019).
The likelihood ratio test in high-dimensional logistic regression is asymptotically a
rescaled chi-square.
\emph{Probability Theory and Related Fields}, \textbf{175}(1--2), 487--558.

\bibitem[Van Calster and Vickers(2015)]{vancalster2015}
Van Calster, B. and Vickers, A.J. (2015).
Calibration of risk prediction models: impact on decision-analytic performance.
\emph{Medical Decision Making}, \textbf{35}(2), 162--169.

\bibitem[Van Calster et~al.(2016)]{vancalster2016}
Van Calster, B., Nieboer, D., Vergouwe, Y., De~Cock, B., Pencina, M.J. and
Steyerberg, E.W. (2016).
A calibration hierarchy for risk models was defined: from utopia to empirical data.
\emph{Journal of Clinical Epidemiology}, \textbf{74}, 167--176.

\bibitem[Van Calster et~al.(2019)]{vancalster2019}
Van Calster, B., McLernon, D.J., van~Smeden, M. et~al. (2019).
Calibration: the Achilles heel of predictive analytics.
\emph{BMC Medicine}, \textbf{17}, 230.

\bibitem[Van Calster et~al.(2020)]{vancalster2020}
Van Calster, B., van~Smeden, M., De~Cock, B. and Steyerberg, E.W. (2020).
Regression shrinkage methods for clinical prediction models do not guarantee improved
performance: simulation study.
\emph{Statistical Methods in Medical Research}, \textbf{29}(11), 3166--3178.

\bibitem[van~de~Geer et~al.(2014)]{vandegeer2014}
van~de~Geer, S., B\"uhlmann, P., Ritov, Y. and Dezeure, R. (2014).
On asymptotically optimal confidence regions and tests for high-dimensional models.
\emph{The Annals of Statistics}, \textbf{42}(3), 1166--1202.

\bibitem[Zhang and Zhang(2014)]{zhangzhang2014}
Zhang, C.-H. and Zhang, S.S. (2014).
Confidence intervals for low dimensional parameters in high dimensional linear
models.
\emph{Journal of the Royal Statistical Society, Series B}, \textbf{76}(1), 217--242.

\end{thebibliography}
\end{document}